\definecolor{lgray}{RGB}{240,240,240}
\definecolor{webgreen}{rgb}{0,.5,0}
\definecolor{webbrown}{rgb}{.6,0,0}
\definecolor{RoyalBlue}{cmyk}{1, 0.50, 0, 0}
\newtheorem{theorem}{Theorem}[section]
\newtheorem{corollary}[theorem]{Corollary}
\newcommand{\C}		{\mathbb{C}}
\newcommand{\N}		{\mathbb{N}}
\newcommand{\dist}{\mathrm{dist}}
\newcommand{\sgn}{\mathrm{sgn}}
\renewcommand{\arg}{\mathrm{arg}}
\renewcommand{\det}{\mathrm{det}}
\newcommand{\Ai}{\mathrm{Ai}}
\newcommand{\Bi}{\mathrm{Bi}}
\newcommand{\qandq}{\quad \text{and} \quad}
\newcommand{\ic}{\mathrm{i}}
\begin{document}

\title{On Airy Solutions of P$_\mathrm{II}$ and Complex Cubic Ensemble of Random Matrices, I}

\author{Ahmad Barhoumi}
\address{Department of Mathematics, Royal Institute of Technology (KTH), Stockholm, Sweden}
\email{ahmadba@kth.se}

\author{Pavel  Bleher}
\address{Department of Mathematical Sciences, Indiana University Indianapolis, 402~North Blackford Street, Indianapolis, IN 46202, USA}
\email{pbleher@iupui.edu}

\author{Alfredo Dea\~no}
\address{Department of Mathematics, Universidad Carlos III de Madrid, Avda. de la Universidad 30, 28911 Legan\'es, Madrid, Spain}
\email{alfredo.deanho@uc3m.es}

\author{Maxim Yattselev}
\address{Department of Mathematical Sciences, Indiana University Indianapolis, 402~North Blackford Street, Indianapolis, IN 46202, USA}
\email{maxyatts@iu.edu}

\thanks{The third author acknowledges financial support from Direcci\'on General de Investigaci\'on e Innovaci\'on, Consejer\'ia de Educaci\'on e Investigaci\'on of Comunidad de Madrid (Spain), and Universidad de Alcal\'a under grant CM/JIN/2021-014, as well as from grant PID2021-123969NB-I00, funded by MCIN/AEI/10.13039/501100011033, and from grant PID2021-122154NB-I00 from Spanish Agencia Estatal de Investigaci\'on. The research of the last author was supported in part by a grant from the Simons Foundation, CGM-706591.}

\subjclass[2020]{15B52, 33C10, 33C47}

\keywords{Airy functions, Painlev\'e equations, random matrix models}

\begin{abstract}
We show that the one-parameter family of special solutions of  P$_\mathrm{II}$, the second Painlev\'e equation, constructed from the Airy functions, as well as associated solutions of P$_\mathrm{XXXIV}$ and S$_\mathrm{II}$, can be expressed via the recurrence coefficients of orthogonal polynomials that appear in the analysis of the Hermitian random matrix ensemble with a  cubic potential. Exploiting this connection we show that solutions of  P$_\mathrm{II}$ that depend only on the first Airy function \( \Ai \) (but not on \( \Bi \)) possess a scaling limit in the pole free region, which includes a disk around the origin whose radius grows with the parameter. We then use the scaling limit to show that these solutions are monotone in the parameter on the negative real axis.
\end{abstract}

\maketitle

\section{Introduction}

At the beginning of the 20th century, it was shown by Painlev\'e and Gambier that among the second order differential equations of the form
\[
q^{\prime\prime}= F(q^\prime,q,z),
\]
where \( F \) is rational in \( q,q^\prime \) and analytic in \( z \), there are exactly $50$ canonical equations whose solutions do not possess movable branch points. Among these, $44$ can either be reduced to linear equations, equations solved in terms of elliptic functions, or to the remaining $6$ equations. In this work we are interested in what is now known as the second Painlev\'e equation, see \eqref{P2}. It was shown by Gambier \cite{MR1555055} that P$_\mathrm{II}$ has a one-parameter family of solutions that are expressible in terms of the Airy functions and their derivatives. These special function solutions appear in many contexts. For example, in Hermitian random matrix ensembles as the asymptotic analysis of eigenvalue behavior (in a suitable double scaling limit as the size of the matrices tends to infinity and near the edge of the spectrum) makes use of a correlation kernel that in particular cases can be written in terms of Airy solutions of P$_\mathrm{II}$ and P$_\mathrm{XXXIV}$, see \cite{MR2429248}. Another appearance of these functions in random matrix theory comes from Hermitian ensembles with cubic potential. The cubic random matrix model, given by the probability distribution
\[
\tilde Z_N^{-1}(u) e^{-N\mathrm{Tr}(M^2/2-uM^3)}dM
\]
on \( N\times N \) Hermitian matrices \( M \), where \( u \) is a parameter, has been investigated in physical literature by Br\'ezin, Itzykson, Parisi, and Zuber \cite{BIPZ} and Bessis, Itzykson, and Zuber \cite{BIZ}. There it was recognized that the free energy of this model must possess an expansion in the parameter \( N \). The \( N^{-2g} \) coefficient of this expansion, as a function of \( u \), possesses a Taylor series around the origin, whose coefficients count the number of the three-valent  graphs on a Riemann surface of genus \( g \). These considerations were later made rigorous by the middle two authors of the present work \cite{MR3071662}. Further investigations of this model were carried out by Huybrechs, Kuijlaars, and Lejon in \cite{MR3218792,MR4031470} and by the authors in \cite{MR3493550,MR3607591,MR4436195}, see also \cite{MR3090748,MR3319490} for the numerical studies. The main tool of investigation in these works was the connection to non-Hermitian orthogonal polynomials with respect to the cubic weight \( \exp(-z^3/3+tz) \) (parameter \( t \) is connected to the parameter \( u \)), whose moments can be expressed via the Airy functions \( \Ai(z) \) and \( \Bi(z) \) as well as their derivatives. In this work we utilize this connection to show what the results of \cite{MR3607591} can say about the solutions of  P$_\mathrm{II}$ that depend only on \( \Ai(z) \) and its derivatives. The techniques of \cite{MR3071662,MR3493550,MR3607591,MR4436195} can be extended to study all special solutions and we shall do so in a subsequent publication.

\section{Airy Solutions of P$_\mathrm{II}$}

Each of the Painlev\'e equations P$_\mathrm{I}$--P$_\mathrm{VI}$ can be written as a Hamiltonian system
\[
\frac{dq}{dz} = \frac{\partial H}{\partial p} \qandq \frac{dp}{dz} = -\frac{\partial H}{\partial q},
\]
see \cite{MR596006,MR581468,MR625446}. In the case of P$_\mathrm{II}$, the Hamiltonian \( H_\mathrm{II} \) is equal to
\[
H_\mathrm{II}(q,p,z;\alpha) = \frac12p^2 - \left(q^2 + \frac z2\right) p - \left(\alpha+\frac12\right)q,
\]
where \( \alpha \) is a parameter. Hence, the Hamiltonian system becomes
\begin{equation}
\label{ham_sys}
q^\prime = p-q^2 - z/2 \qandq p^\prime = 2pq + \alpha + 1/2.
\end{equation}
Eliminating \( p \) from \eqref{ham_sys} gives P$_\mathrm{II}$:
\begin{equation}
\label{P2}
q^{\prime\prime} = 2q^3 + zq + \alpha,
\end{equation}
while eliminating \( q \) from \eqref{ham_sys} yields P$_\mathrm{XXXIV}$:
\begin{equation}
\label{P34}
p^{\prime\prime} = \frac{( p^\prime )^2 - (\alpha+1/2)^2}{2p} + 2p^2 - zp.
\end{equation}
Notice that the first equation \eqref{ham_sys} allows one to express \( p \) through \( q,q^\prime \) and \( z \) while the second equation allows to express \( q \) through \( p,p^\prime \). These connection formulae provide one-to-one correspondence between solutions of P$_\mathrm{II}$ and P$_\mathrm{XXXIV}$. Morever, if \( p_{\alpha+1/2}(z),q_{\alpha+1/2}(z) \) are solutions of \eqref{ham_sys} and
\begin{equation}
\label{sigma1}
\sigma_{\alpha+1/2}(z) := H_\mathrm{II} \big( q_{\alpha+1/2}(z),p_{\alpha+1/2}(z),z;\alpha \big),
\end{equation}
then this function solves S$_\mathrm{II}$, the Jimbo-Miwa-Okamoto \( \sigma \)-form of P$_\mathrm{II}$:
\begin{equation}
\label{S2}
( \sigma^{\prime\prime})^2 + 4( \sigma^\prime )^3 + 2 \sigma^\prime ( z \sigma^\prime -\sigma ) = (\alpha/2+1/4)^2.
\end{equation}
Conversely, if \( \sigma_{\alpha+1/2}(z) \) solves \eqref{S2}, then the functions
\begin{equation}
\label{sigma2}
q_{\alpha+1/2}(z) = \frac{2\sigma_{\alpha+1/2}^{\prime\prime}(z)+\alpha+1/2}{4\sigma_{\alpha+1/2}^\prime(z)} \qandq p_{\alpha+1/2}(z) = -2\sigma_{\alpha+1/2}^\prime(z), 
\end{equation}
solve \eqref{P2} and \eqref{P34}, respectively, see \cite{MR596006,MR581468,MR625446,MR854008}.

It is known from the work of Gambier, see \cite{MR1555055}, that \eqref{P2}, and therefore \eqref{P34} and \eqref{S2} via the correspondence \eqref{ham_sys}, \eqref{sigma1}, and \eqref{sigma2}, has a one-parameter family of solutions expressible through the Airy functions and their derivatives if and only if \( \alpha +1/2 \) is an integer. Recall that the standard Airy functions can be given by their integral representations
\begin{equation}
\label{Ai}
\Ai(z) = \frac1{2\pi\ic} \int_{L_1-L_2} e^{-V(s;z)} d s, \quad V(s;z) := -\frac{s^3}3 + sz,
\end{equation}
where \( L_k := \big\{ xe^{\pi\ic(-1+2k/3)}:x\in(0,\infty)\big\} \), \( k\in\{0,1,2\} \), are rays oriented towards the origin, and
\begin{equation}
\label{Bi}
\Bi(z) = \frac1{2\pi} \left(\int_{L_0-L_1} + \int_{L_0-L_2}\right) e^{-V(s;z)} d s.
\end{equation}
When \( \alpha = 1/2 \), it can be shown that \eqref{P2} has a one-parameter family of solutions given by
\[
q_1(z;\lambda) = - \frac{d}{dz} \log \left( C_1 \, \Ai\left(-2^{-1/3}z\right) + C_2 \, \Bi\left(-2^{-1/3}z\right) \right),
\]
where \( \lambda\in\overline\C \) and \( C_1,C_2\in\C \) are any numbers such that \( C_2/C_1 =\lambda\in\C \) and \( C_1=0 \) when \( \lambda=\infty \) (the branch of the logarithm is not important as we immediately take \( z \) derivative). The solutions \( q_n(z;\lambda) \) corresponding to \( \alpha=n-1/2 \), \( n\geq 1 \), are constructed recursively via B\"acklund transformation
\[
q_{n+1}(z;\lambda) = -q_n(z;\lambda) - \frac{2n}{2q_n^2(z;\lambda)+2q_n^\prime(z;\lambda)^2+z}.
\]
The solutions for non-positive values of \( n \) are then obtained via \( q_n(z;\lambda) = -q_{-n+1}(z;\lambda) \). However, the above formula is not the most convenient way of expressing these solutions. To this end, define \( \tau_0(z;\lambda)\equiv1 \) and
\begin{equation}
\label{taun}
\tau_n(z;\lambda) := \det \left[ \frac{d^{j+k}}{dz^{j+k}} \left( C_1 \, \Ai\left(-2^{-1/3}z\right) + C_2 \, \Bi\left(-2^{-1/3}z\right) \right) \right]_{j,k=0}^{n-1}
\end{equation}
when \( n\in\N \). Then, see \cite{MR588248,MR854008,MR3529955}, the solutions \( q_n(z;\lambda) \), \( n\geq 1 \), and the corresponding functions \( p_n(z;\lambda) \) and \( \sigma_n(z;\lambda) \) solving \eqref{P34} and \eqref{S2}, respectively, can be expressed as
\begin{equation}
\label{Airy_sol}
\begin{cases}
q_n(z;\lambda) & \displaystyle = \frac{d}{dz}\log\frac{\tau_{n-1}(z;\lambda)}{\tau_n(z;\lambda)}, \smallskip \\
p_n(z;\lambda) & \displaystyle = -2\frac{d^2}{dz^2}\log\tau_n(z;\lambda), \smallskip \\
\sigma_n(z;\lambda) & \displaystyle = \frac{d}{dz} \log \tau_n(z;\lambda).
\end{cases}
\end{equation}

It was pointed out by Clarkson in \cite{MR3529955}, see also \cite{MR3260257}, based on the numerical computations, that the functions \( q_n(z;0) \) seemed to be tronqu\'ee solutions of P$_\mathrm{II}$; that is, they have no poles in a sector of the complex plane.  Numerical computations of Fornberg and Weideman in \cite{MR3260257} also suggested that for general \( \lambda \) there are three sectors in which the solutions \(  q_n(z;\lambda) \) are pole free, see Figure~\ref{fig:zp}. These conjectures were proven for \( z \) large by the third author in \cite{MR3860606}, where large \( z \) expansions in the pole free sectors were obtained. In this note we further provide scaling limits of the purely Airy solutions. We shall obtain scaling limits of all special solutions in a subsequent publication. Moreover, it was suggested by Clarkson in \cite{MR3529955}, see also \cite{MR3516120}, again, based on numerical computations, that \( q_{n+1}(z;0)<q_n(z;0) \) for \( z<0 \). We confirm this conjecture for all \( n \) large.

\begin{figure}[ht!]
\centering
\begin{subfigure}{.45\textwidth}
\includegraphics[width=\textwidth]{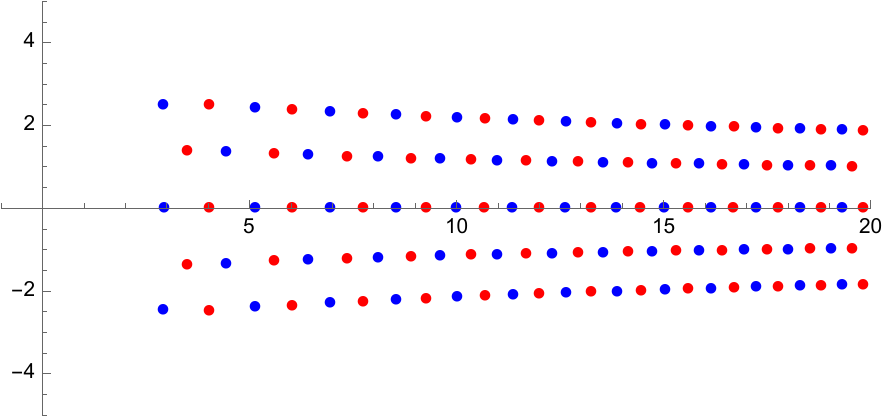}
\end{subfigure}
\quad
\begin{subfigure}{.35\textwidth}
\includegraphics[width=\textwidth]{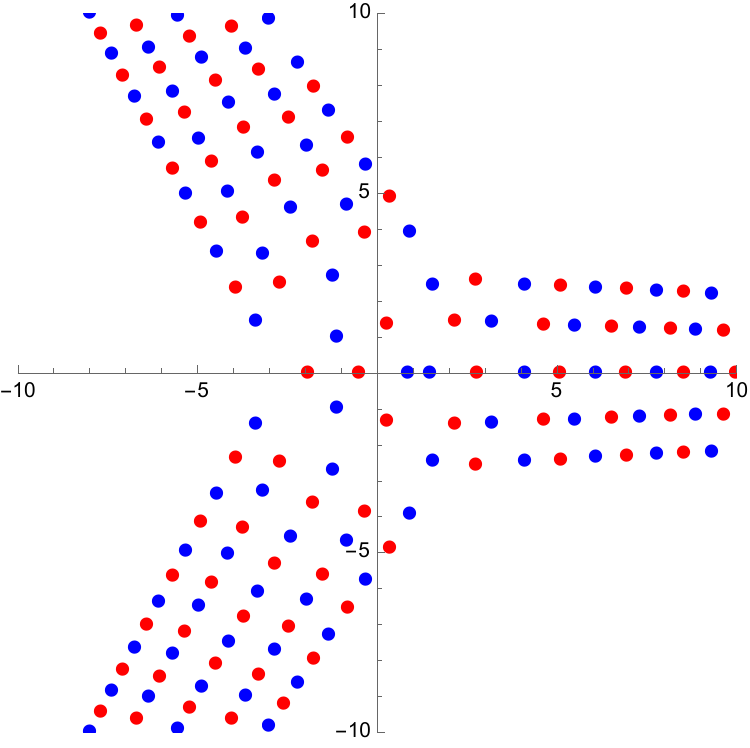}
\end{subfigure}
\caption{\small Zeros (blue) and poles (red) of \( q_3(z;0) \) (left panel) and \( q_3(z;\infty) \) (right panel).}
\label{fig:zp}
\end{figure}

\section{Complex Cubic Ensemble of Random Matrices}

Consider the unitary ensemble of random matrices with the cubic potential\
\[
V(M)=-\frac{1}{3}M^3+tM,
\]
where $t$ is a complex parameter. The partition function of this ensemble is formally defined by the matrix integral over the space of $N\times N$ Hermitian matrices:
\[
\int_{\mathcal H_N} e^{-N\mathrm{Tr} \left(-\frac{1}{3}M^3+tM\right)} dM.
\]
Then the formal partition functions of the eigenvalues becomes
\[
\int_{-\infty}^\infty\ldots\int_{-\infty}^\infty \prod_{1\leq j<k\leq N}(s_j-s_k)^2\, \prod_{j=1}^N e^{-N V(s_j;t)} ds_1\ldots ds_N,
\]
where \( V(s;t) \) was defined above in \eqref{Ai}. This expression is formal because the integrals are divergent and need regularization. To achieve it, let
\[
\Gamma = \Gamma(\lambda) := \alpha_0 L_0 + \alpha_1 L_1 + \alpha_2 L_2,
\]
where the rays \( L_k \) were defined right after \eqref{Ai} and  \( \alpha_0,\alpha_1,\alpha_2\in\C \) are complex parameters such that
\begin{equation}
\label{alphas}
\alpha_0 = \frac\lambda\pi, \quad \alpha_1 = -\frac{\lambda}{2\pi} + \frac1{2\pi\ic}, \qandq \alpha_2 = -\frac{\lambda}{2\pi} -  \frac1{2\pi\ic},
\end{equation}
when \( |\lambda|<\infty \) and \( \alpha_0=1/\pi \), \( \alpha_1=\alpha_2 = -1/2\pi \) when \( \lambda=\infty \). Let
\[
Z_N(t) := \int_{\Gamma}\ldots\int_{\Gamma} \prod_{1\leq j<k\leq N}(s_j-s_k)^2\, \prod_{j=1}^N e^{-N V(s_j;t)}  ds_1\ldots ds_N,
\]
which is well defined (the integrals are convergent) for all values $t\in \C$, where we understand \( \int_\Gamma \) as \( \sum \alpha_k\int_{\Gamma_k}\).  As shown in \cite{MR3071662},  the topological expansion of the free energy
\[
F_N(t) = \frac1{N^2} \log Z_N(t)
\]
of the cubic ensemble of random matrices is connected to the enumeration of regular graphs of degree 3 on Riemann surfaces. The partition function \( Z_N(t) \) can also be expressed as
\begin{equation}
\label{partition}
Z_N(t) = N! D_{N-1}(t,N), \quad D_n(t,N) := \left[\int_\Gamma s^{i+j} e^{-NV(s;t)}ds\right]_{i,j=0}^n.
\end{equation}
The entries of the matrix defining \( D_n(-t,N) \) can be written as
\begin{align*}
\int_\Gamma s^k e^{-NV(s;-t)}ds &= \frac1{N^{(k+1)/3}} \int_\Gamma s^k e^{-V\left(s;-N^{2/3}t\right)}ds \\
& = \frac1{N^{k+1/3}} \frac{d^k}{dt^k} \left( \int_\Gamma e^{-V\left(s;-N^{2/3}t\right)}ds \right) \\
& = \frac1{N^{k+1/3}} \frac{d^k}{dt^k} \left( \alpha_0\pi \, \Bi\left(-N^{2/3}t\right) + (\alpha_1-\alpha_2)\pi\ic \, \Ai\left(-N^{2/3}t\right)\right),
\end{align*}
where we used expressions \eqref{Ai} and \eqref{Bi} as well as the condition \( \alpha_0+\alpha_1+\alpha_2=0 \).  That is, upon setting \( z= (\sqrt 2N)^{2/3} t \), we get that
\[
\int_\Gamma s^k e^{-NV(s;-t)}ds = \frac{2^{k/3}}{N^{(k+1)/3}} \frac{d^k}{dz^k} \left( \alpha_0\pi \, \Bi\left(-2^{-1/3}z\right) + (\alpha_1-\alpha_2)\pi\ic \, \Ai\left(-2^{-1/3}z\right)\right).
\]
This formula and the Hankel structure of the determinant \( D_n(-t,N) \) then imply that
\begin{equation}
\label{Dtau}
D_n(-t,N) = \frac{2^{n(n+1)/3}}{N^{(n+1)(n+2)/3}} \tau_{n+1}(z;\lambda ),  \quad z= (\sqrt 2N)^{2/3} t,
\end{equation}
where \( \tau_n(z;\lambda) \) was defined in \eqref{taun}. In particular, if we take \( n=N \),  the last two representations in \eqref{Airy_sol} can be rewritten in terms of the free energy \( F_N(t) \) as
\[
\begin{cases}
p_N(z;\lambda) & \displaystyle = -2N^2\frac{d^2}{dz^2}F_N\left(-(\sqrt 2N)^{-2/3} z\right), \smallskip \\
\sigma_N(z;\lambda) & \displaystyle = N^2\frac{d}{dz} F_N\left(-(\sqrt 2N)^{-2/3} z\right).
\end{cases}
\]

The cases of particular interest to us in this note are \( \lambda=0,-\ic,\ic \). These cases are essentially the same as they correspond to taking the seed function in  \eqref{taun} to be
\[
\Ai\left(-2^{-1/3}z\right), \quad \Ai\left(-2^{-1/3}z e^{2\pi\ic/3}\right),\quad \text{and} \quad \Ai\left(-2^{-1/3}z e^{-2\pi\ic/3}\right)
\]
since \( \Ai(z) \pm \ic \Bi(z) = 2e^{\pm2\pi\ic/3} \Ai(ze^{\mp2\pi\ic/3}) \), see \cite[Equation~(9.2.1)]{DLMF}. It is worth pointing out that, in these special cases, the functions \( \tau_{n+1} \) also appear in the study of the scaling limits of the largest eigenvalue in the GUE ensemble, see \cite[Proposition~28]{MR1833807}.

The above representations can be further rewritten utilizing the connection to orthogonal polynomials. Let \( P_n(s;t,N) \) be a non-identically zero polynomial of degree at most \( n \) such that
\begin{equation}
\label{ortho}
\int_{\Gamma} s^kP_n(s;t,N)e^{-NV(s;t)} ds = 0,\quad k\in\{0,\ldots, n-1\}.
\end{equation}
Due to the non-Hermitian character of the above relations, it might happen that a polynomial satisfying \eqref{ortho} is non-unique. In this case we understand by $P_n(z;t,N)$ the monic non-identically zero polynomial of the smallest degree (such polynomial is always unique). The standard determinantal representation of orthogonal polynomials yields that 
\begin{equation}
\label{hnN}
h_n(t,N) := \int_{\Gamma} P_n^2(s;t,N)e^{-NV(s;t)} ds = \frac{D_n(t,N)}{D_{n-1}(t,N)},
\end{equation}
where \( D_{-1}(t,N)\equiv 1 \). Observe that \( D_n(t,N) \) is an entire function of \( t \) and therefore each \( h_n(t,N) \) is meromorphic in \( \C \). Hence, given \( n \), the set of the values \( t \) for which there exists \( k\in\{0,\ldots,n \} \) such that \( h_k(t,N) =0 \) is countable with no limit points in the finite plane. Outside of this set the standard argument using \eqref{ortho} shows that
\begin{equation}
\label{recurrence}
sP_n(s;t,N) = P_{n+1}(s;t,N)+\beta_n(t,N) P_n(s;t,N)+\gamma_n^2(t,N) P_{n-1}(s;t,N),
\end{equation}
and by analytic continuation this relation extends to those values of \( t \) for which we have that \( D_{n-1}(t,N)D_n(t,N) \neq 0 \) (that is, \( n+1 \)-st and \( n \)-th polynomials appearing in \eqref{recurrence} have the prescribed degrees), where
\begin{equation}
\label{gammanN}
\gamma_n^2(t,N) = \frac{h_n(t,N)}{h_{n-1}(t,N)}.
\end{equation}
Finally, denote by \( p_{n,n-1}(t,N) \) the coefficient of \( P_n(s;t,N) \) next to \( s^{n-1} \), where \( p_{0,-1}(t,N) := 0 \). It easily follows from \eqref{recurrence} that
\begin{equation}
\label{betanN}
\beta_n(t,N) = p_{n,n-1}(t,N) - p_{n+1,n}(t,N).
\end{equation}
In what follows, when it is important to us to stress that the quantities appearing in \eqref{partition}, \eqref{ortho},  \eqref{hnN}, \eqref{gammanN}, and \eqref{betanN} depend on \( \lambda \), we shall use superscript \( (\lambda) \).

\begin{theorem}
\label{thm:airy}
Fix \( N\geq 1 \). Given \( \lambda\in\overline\C \), it holds for each \( n\geq 1 \) that
\begin{equation}
\label{Airy_Pol}
\begin{cases}
q_n(z;\lambda) & \displaystyle = -(N/2)^{1/3} \, \beta_{n-1}^{(\lambda)} \left(-(\sqrt 2N)^{-2/3} z,N\right), \smallskip \\
p_n(z;\lambda) & \displaystyle = -2(N/2)^{2/3} \, \gamma_n^{(\lambda)} \left(-(\sqrt 2N)^{-2/3} z,N\right)^2, \smallskip \\
\sigma_n(z;\lambda) & \displaystyle = -(N/2)^{1/3} \, p_{n,n-1}^{(\lambda)}\left(-(\sqrt 2N)^{-2/3} z,N\right).
\end{cases}
\end{equation}
\end{theorem}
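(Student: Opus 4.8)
The plan is to rewrite both sides of \eqref{Airy_Pol} in terms of the Hankel determinants \( D_m(t,N) \) of \eqref{partition} and to match them via the \( \tau \)-function formulas \eqref{Airy_sol} and the identification \eqref{Dtau}. On the orthogonal polynomial side I would first record three facts. From \eqref{hnN} and \eqref{gammanN},
\[
\gamma_n^{(\lambda)}(t,N)^2 = \frac{D_n(t,N)\,D_{n-2}(t,N)}{D_{n-1}(t,N)^2}.
\]
Next, differentiating \eqref{hnN} in \( t \): since \( \partial_t e^{-NV(s;t)} = -N s\, e^{-NV(s;t)} \) and the \( t \)-derivative of the monic polynomial \( P_n \) has degree at most \( n-1 \) (so it drops out by \eqref{ortho}), one gets \( \partial_t \log h_n(t,N) = -N\beta_n(t,N) \) using \eqref{recurrence}. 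Summing \eqref{betanN} telescopically gives \( \sum_{k=0}^{n-1}\beta_k^{(\lambda)} = -p_{n,n-1}^{(\lambda)} \), and since \( \prod_{k=0}^{n-1} h_k = D_{n-1} \) we obtain
\[
p_{n,n-1}^{(\lambda)}(t,N) = \frac1N\,\frac{d}{dt}\log D_{n-1}(t,N).
\]
Finally, the Toda (Hirota bilinear) identity for Hankel determinants, \( D_{n-1}''D_{n-1} - (D_{n-1}')^2 = N^2 D_n D_{n-2} \); equivalently, \( \frac{d}{dt} p_{n,n-1}^{(\lambda)}(t,N) = N\,\gamma_n^{(\lambda)}(t,N)^2 \).

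With these in hand the rest is bookkeeping. Write \eqref{Dtau} as \( \tau_n(z;\lambda) = c_{n-1}(N)\, D_{n-1}(-(\sqrt 2 N)^{-2/3}z, N) \) with \( c_{n-1}(N) = N^{n(n+1)/3} 2^{-n(n-1)/3} \). Then \( \frac{d}{dz}\log\tau_n \) and \( \frac{d^2}{dz^2}\log\tau_n \) become, by the chain rule, the first and second \( t \)-derivatives of \( \log D_{n-1} \) evaluated at \( t = -(\sqrt 2 N)^{-2/3}z \), the chain rule producing exactly the factors \( (N/2)^{1/3} \) and \( (N/2)^{2/3} \) and the argument \( -(\sqrt 2 N)^{-2/3}z \). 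The third line of \eqref{Airy_sol} then becomes \( \sigma_n(z;\lambda) = -(N/2)^{1/3} p_{n,n-1}^{(\lambda)}(-(\sqrt 2 N)^{-2/3}z, N) \); the first line of \eqref{Airy_sol}, rewritten via the third, reads \( q_n = \sigma_{n-1} - \sigma_n \), which together with \eqref{betanN} gives the first line of \eqref{Airy_Pol}; and the second line of \eqref{Airy_sol} combined with the Toda identity gives the \( \gamma_n^2 \) formula. One should note that \eqref{recurrence}, \eqref{betanN}, and \( \partial_t\log h_n = -N\beta_n \) hold only for \( t \) outside the countable set where some \( h_k(t,N) \) vanishes; since each \( \tau_n \) is entire and \( \beta_{n-1},\gamma_n^2,p_{n,n-1} \) are meromorphic, the identities \eqref{Airy_Pol} pass to all \( z \) by analytic continuation. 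The case \( \lambda = \infty \) is identical, the seed function in \eqref{taun} being a constant multiple of \( \Bi(-2^{-1/3}z) \); differentiation under the integral sign is legitimate because \( e^{-NV(s;t)} \) decays super-exponentially along the rays \( L_k \).

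I expect the one genuinely non-routine ingredient to be the Toda identity \( D_{n-1}'' D_{n-1} - (D_{n-1}')^2 = N^2 D_n D_{n-2} \). It can be proved directly from \( \partial_t \mu_k = -N\mu_{k+1} \): differentiating the Hankel determinant \( D_m \) annihilates every term except the one in which the last row is differentiated (each of the others has a repeated row), so \( D_m' = -N D_m^\sharp \) with \( D_m^\sharp \) the determinant obtained by shifting the indices of the last row of \( D_m \) up by one; one more differentiation together with the Desnanot--Jacobi (Sylvester) determinant identity relating \( D_{m\pm1} \), \( D_m \), and the once- and twice-shifted determinants yields the claim. Alternatively it may simply be cited as the standard Toda/Hirota bilinear relation for Hankel \( \tau \)-functions. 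Beyond this, the proof is the routine tracking of powers of \( 2 \) and \( N \) already visible in \eqref{Dtau}.
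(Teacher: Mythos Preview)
Your argument is correct and follows essentially the same route as the paper: identify \(\tau_n\) with \(D_{n-1}\) via \eqref{Dtau}, differentiate \(h_n\) to obtain \(\partial_t\log h_n=-N\beta_n\), telescope to get \(p_{n,n-1}=\tfrac1N\,\partial_t\log D_{n-1}\), and then read off \eqref{Airy_Pol} from \eqref{Airy_sol} by the chain rule. The one place you work harder than necessary is the relation \(\partial_t\,p_{n,n-1}=N\gamma_n^2\): rather than invoking the Hirota/Desnanot--Jacobi identity for Hankel determinants, the paper obtains it in a single line by differentiating the orthogonality relation \(\int_\Gamma P_nP_{n-1}e^{-NV}\,ds=0\), which immediately yields \(Nh_n=p_{n,n-1}'\,h_{n-1}\) and hence \(p_{n,n-1}'=N\gamma_n^2\).
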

\begin{proof}
Since \( \lambda,N \) are fixed, we do not indicate the dependence on them of all the quantities related to orthogonal polynomials. A combination of \eqref{Airy_sol}, \eqref{Dtau}, and \eqref{hnN} immediately yields that 
\[
q_n(z;\lambda) = -\frac d{dz} \log h_{n-1}\left(-(\sqrt 2N)^{-2/3} z\right).
\]
Since \( P_n(s;t) \) is a monic polynomial, its partial \( t \) derivative is a polynomial of degree at most \( n - 1\), which is, of course,  orthogonal to \( P_n(s;t) \). Hence, we get from \eqref{hnN}, \eqref{ortho}, and \eqref{betanN} that
\begin{align}
\frac d{dt} h_n(-t) &= N\int_{\Gamma} sP_n^2(s;-t)e^{-NV(s;-t)} ds \nonumber \\
& = N\int_{\Gamma} \big(s^{n+1} + p_{n,n-1}(-t) s^n\big)P_n(s;-t)e^{-NV(s;-t)} ds \nonumber \\
& = N\int_{\Gamma} P_{n+1}(s;-t) P_n(s;-t)e^{-NV(s;-t)} ds + N \beta_n(-t) h_n(-t) \nonumber \\
& = N \beta_n(-t) h_n(-t) ,
\label{der_hn}
\end{align}
from which the formula for \( q_n(z;\lambda) \) follows. Now, we get from \eqref{hnN} that
\[
\log D_n(t) = \sum_{k=0}^n \log h_k(t).
\]
The above expression and \eqref{der_hn}, together with \eqref{Airy_sol} and \eqref{Dtau}, establish formula for \( \sigma_n(z;\lambda) \) since
\begin{equation}
\label{der_lnD}
\frac d{dt} \log D_{n-1}(-t) =N  \sum_{k=0}^{n-1} \beta_k(-t) = -Np_{n,n-1}(-t).
\end{equation}
Finally, we get from orthogonality relations \eqref{ortho} that
\begin{align}
0 & = \frac d{dt} \left( \int_\Gamma  P_n(s;-t) P_{n-1}(s;-t)e^{-NV(s;-t)} ds \right)  \nonumber \\
&=  \int_\Gamma  \left( NsP_n(s;-t) -p_{n,n-1}^\prime(-t) s^{n-1} \right)P_{n-1}(s;-t)e^{-NV(s;-t)} ds \nonumber  \\
\label{der_pn}
& = Nh_n(-t) -p_{n,n-1}^\prime(-t) h_{n-1}(-t),
\end{align}
which, together with \eqref{Airy_sol}, \eqref{Dtau}, and  \eqref{gammanN}, gives the expression for \( p_n(z;\lambda) \).
\end{proof}

The results of the previous theorem can equivalently be rewritten in the following form.
\begin{corollary}
\label{cor:airy}
Given \( \lambda\in\overline\C \), it holds for each \( n\geq 1 \) that
\begin{equation}
\label{Airy_Pol1}
\begin{cases}
q_n(z;\lambda) & \displaystyle = -2^{-1/3} \, \beta_{n-1}^{(\lambda)} \big(-2^{-1/3} z,1\big), \smallskip \\
p_n(z;\lambda) & \displaystyle = -2^{1/3} \, \gamma_n^{(\lambda)} \big(-2^{-1/3} z,1\big)^2, \smallskip \\
\sigma_n(z;\lambda) & \displaystyle = -2^{-1/3} \, p_{n,n-1}^{(\lambda)} \big(-2^{-1/3} z,1\big).
\end{cases}
\end{equation}
\end{corollary}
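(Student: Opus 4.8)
The plan is to read the corollary off Theorem~\ref{thm:airy} by specializing that theorem to $N=1$; the only arithmetic involved is the identity $(1/2)^{1/3} = (\sqrt2)^{-2/3} = 2^{-1/3}$ together with $2\cdot(1/2)^{2/3} = 2^{1/3}$ for the middle line, which turns each formula in \eqref{Airy_Pol} into the corresponding formula in \eqref{Airy_Pol1}. So at a minimum the corollary needs nothing beyond the theorem already established.

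To justify the word ``equivalently'', that is, to see that the full $N$-dependence in \eqref{Airy_Pol} carries no information beyond the $N=1$ case, I would record the scaling symmetry of the cubic weight. Since $V(s;t) = -s^3/3 + ts$ is quasi-homogeneous, one has $N V(s;t) = V\big(N^{1/3}s;\,N^{2/3}t\big)$, and each ray $L_k$ — hence the contour $\Gamma(\lambda)$ together with the coefficients $\alpha_0,\alpha_1,\alpha_2$ — is invariant under multiplication by the positive number $N^{1/3}$ (including when $\lambda=\infty$). Substituting $u = N^{1/3}s$ in the orthogonality relations \eqref{ortho} then shows that $N^{n/3}P_n\big(N^{-1/3}u;t,N\big)$ is a monic polynomial in $u$ of the same degree as $P_n$, orthogonal to $1,u,\dots,u^{n-1}$ against $e^{-V(u;N^{2/3}t)}\,du$ on $\Gamma$; because the change of variables preserves degrees it also preserves the minimality requirement in the definition of $P_n$, so by uniqueness of the minimal monic orthogonal polynomial
\[
P_n(s;t,N) = N^{-n/3}\,P_n\big(N^{1/3}s;\,N^{2/3}t,\,1\big).
\]

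Feeding this identity into the three-term recurrence \eqref{recurrence} and comparing with the recurrence at $N=1$ and parameter $N^{2/3}t$ yields $\beta_n(t,N) = N^{-1/3}\beta_n(N^{2/3}t,1)$ and $\gamma_n^2(t,N) = N^{-2/3}\gamma_n^2(N^{2/3}t,1)$, while extracting the coefficient of $s^{n-1}$ gives $p_{n,n-1}(t,N) = N^{-1/3}p_{n,n-1}(N^{2/3}t,1)$; the superscript $(\lambda)$ is inert throughout. Substituting these three relations into \eqref{Airy_Pol} and simplifying the powers of $N$ (for the first line, $(N/2)^{1/3}N^{-1/3} = 2^{-1/3}$ and $N^{2/3}(\sqrt2 N)^{-2/3} = 2^{-1/3}$) reproduces \eqref{Airy_Pol1}, and conversely \eqref{Airy_Pol1} together with the same scaling relations recovers \eqref{Airy_Pol} for every $N$. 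I do not anticipate a genuine obstacle; the two small points that deserve a sentence are the invariance of $\Gamma(\lambda)$ under positive rescaling (trivial but uses the explicit form of the $\alpha_k$) and the appeal to uniqueness of the minimal monic orthogonal polynomial, which is exactly what lets the defining property of $P_n$ be transported through the substitution despite the non-Hermitian orthogonality.
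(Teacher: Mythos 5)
Your proposal is correct and, in its main body, follows exactly the paper's route: the paper also substitutes $s\mapsto N^{-1/3}s$ in the orthogonality relations \eqref{ortho} to obtain $P_n(s;t,N)=N^{-n/3}P_n\big(N^{1/3}s;N^{2/3}t,1\big)$, deduces the three scaling relations \eqref{scal_form} for $\beta_{n-1}$, $\gamma_n^2$, and $p_{n,n-1}$, and feeds them into \eqref{Airy_Pol}. Your opening observation that the corollary as stated already follows by simply setting $N=1$ in Theorem~\ref{thm:airy} is a legitimate shortcut the paper does not spell out; the paper instead proves the full scaling identity because \eqref{scal_form} is used again later (for instance, $\beta_{n-1}(t,1)=n^{1/3}\beta_{n-1}(n^{-2/3}t,n)$ in the proof of Theorem~\ref{thm:clarkson}), so your version is, if anything, slightly more complete in separating the two roles of the statement. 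Your explicit appeal to uniqueness of the minimal monic orthogonal polynomial is also sound; the paper handles the same point by restricting to the generic $t$ where $\deg P_n=n$ and continuing analytically, and either justification suffices.
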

\begin{proof}
Again, we do not indicate the dependence on \( \lambda \). It readily follows from \eqref{ortho} that
\[
0 = \int_{\Gamma} s^kP_n(s;t,N)e^{-NV(s;t)} ds = N^{-(k+1)/3}\int_\Gamma s^k P_n\left(N^{-1/3}s;t,N\right) e^{-V(s;N^{2/3}t)}ds.
\]
Therefore, it holds that
\[
P_n(s;t,N) = N^{-n/3} P_n\big(N^{1/3}s;N^{2/3}t,1\big)
\]
when \( \deg P_n =n \), which happens for the values of \( t \) outside of a countable set without limit points in the finite plane. From this we can readily deduce that
\begin{equation}
\label{scal_form}
\begin{cases}
\beta_{n-1}(t;N) & = N^{-1/3}\beta_{n-1}\left(N^{2/3}t,1\right), \smallskip \\
\gamma_n^2(t;N) & = N^{-2/3}\gamma_n^2\left(N^{2/3}t,1\right), \smallskip \\
p_{n,n-1}(t,N) & = N^{-1/3}p_{n,n-1}\left(N^{2/3}t,1\right).
\end{cases}
\end{equation}
The above relations and \eqref{Airy_Pol} easily yield \eqref{Airy_Pol1}.
\end{proof}

\section{Scaling Limits}

Now we can formulate a result on the asymptotic behavior of the Airy solutions when \( \lambda\in\{0,-\ic,\ic \} \). To this end, it was shown in \cite[Section~5]{MR3607591} that the critical graph of a quadratic differential
\[
 -(1+1/s)^3 ds^2,
\]
say $\mathcal C$, consists of 5 critical trajectories emanating from $-1$ at the angles $2\pi k/5$, $k\in\{0,1,2,3,4\}$, one of them being $(-1,0)$, other two forming a loop crossing the real line approximately at $0.635$, and the last two approaching infinity along the imaginary axis without changing the half-plane (upper or lower), see Figure~\ref{fig:loops}~(left panel). 
\begin{figure}[ht!]
\centering

\begin{subfigure}{.14\textwidth}
\includegraphics[width=\textwidth]{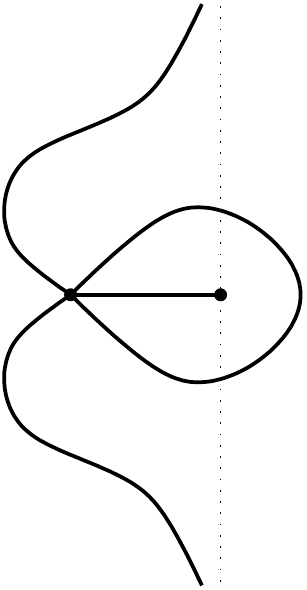}
\begin{picture}(0,0)
\put(37,53){\small $0$}
\put(-4,53){\small $-1$}
\end{picture}
\end{subfigure}
\quad\quad
\begin{subfigure}{.25\textwidth}
\includegraphics[width=\textwidth]{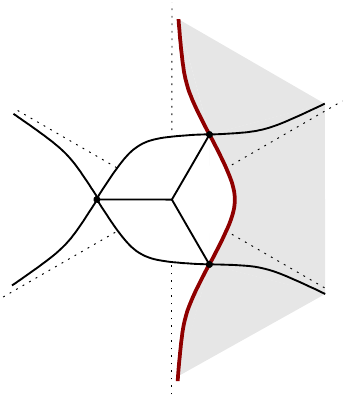}
\begin{picture}(0,0)
\put(-5,55){${\footnotesize -\sqrt[3]{1/2}}$}
\put(61,57){$\Omega_{(0)}$}
\end{picture}
\end{subfigure}
\quad\quad
\begin{subfigure}{.3\textwidth}
\includegraphics[width=\textwidth]{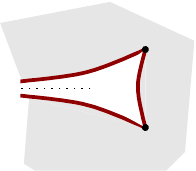}
\begin{picture}(0,0)
\put(50,22){\footnotesize $3\cdot2^{-2/3} e^{-2\pi\ic/3} $}
\put(50,75){\footnotesize $3\cdot2^{-2/3} e^{2\pi\ic/3}$}
\put(75,50){$O_{(0)}$}
\end{picture}
\end{subfigure}
\caption{\small Left panel: the critical graph $\mathcal C$; middle panel: the set $\Delta$ (solid lines) and the domain $\Omega_{(0)}$ (shaded region); right panel: the domain $O_{(0)}$ (shaded region).}
\label{fig:loops}
\end{figure}
Define
\[
\Delta:=\big\{x:~2x^3\in\mathcal C\big\}
\]
and put $\Omega_{(0)}$ to be the subset of the right-half plane bounded by three smooth subarcs\footnote{These subarcs are the one originating at \( e^{\pi\ic/3}\sqrt[3]{1/2} \) and extending to infinity in the direction of the angle \( \pi/2 \), the one connecting \( e^{\pi\ic/3}\sqrt[3]{1/2} \) and \( e^{-\pi\ic/3}\sqrt[3]{1/2} \), and the one originating at \( e^{-\pi\ic/3}\sqrt[3]{1/2} \) and extending to infinity in the direction of the angle \( -\pi/2 \).} of \( \Delta \) as on Figure~\ref{fig:loops}~(middle panel). Further put \( \Omega_{(\pm\ic)} := e^{\mp2\pi\ic/3}\Omega_{(0)}\). Let
\[
t(x) := (x^3-1)/x.
\]
The function $t(x)$ is holomorphic in each $\Omega_{(\lambda)}$, \( \lambda\in\{0,-\ic,\ic\} \), with non-vanishing derivative there. Set
\[
O_{(\lambda)} := t(\Omega_{(\lambda)}),
\]
see~Figure~\ref{fig:loops}~(right panel). The inverse map $x_\lambda(t)$ exists and is holomorphic in $O_{(\lambda)}$.  One can readily check that \(  O_{(\pm\ic)} = e^{\pm2\pi\ic/3} O_{(0)} \) and
\begin{equation}
\label{rel1}
x_0(t) = e^{\pm2\pi\ic/3} x_{\pm\ic}\left( te^{\pm2\pi\ic/3}\right).
\end{equation}
Clearly, the branch \( x_0(t) \) is positive on \( O_{(0)} \cap (-\infty,\infty) \) (it was shown in \cite[Theorem~1.1]{MR3218792} that \( \partial O_{(0)} \) intersects the real line at \( t_0 \sim -1.0005424 \)) and is equal to \( 1 \) at the origin.

\begin{theorem}
\label{thm:scal_lim}
Let \( \lambda\in\{0,-\ic,\ic\} \). For all \( n \) large enough it holds that
\[
\begin{cases}
\displaystyle (2/n)^{1/3} q_n\left(-(\sqrt2n)^{2/3}t;\lambda\right) & = -x_\lambda(t) + \mathcal O\big(n^{-1}\big), \smallskip \\
\displaystyle (2/n)^{2/3} p_n\left(-(\sqrt2n)^{2/3}t;\lambda\right) & = 1/x_\lambda(t) + \mathcal O\big(n^{-2}\big), \smallskip \\
\displaystyle 2^{1/3}n^{-4/3} \sigma_n\left(-(\sqrt2n)^{2/3}t;\lambda\right) & = x_\lambda(t) - (2x_\lambda(t))^{-2} +  \mathcal O\big(n^{-2}\big),
\end{cases}
\]
uniformly on closed subset of \( \{t\in O_{(\lambda)}:|x_\lambda(t)| < \sqrt n/\delta\} \) in the first formula and of \( \{t\in O_{(\lambda)}: |\arg(t-t_\lambda)-\theta_\lambda|>\delta\} \) in the other two formulae, where \( \delta>0 \) is any, \( \theta_0=\pi \), \( \theta_{\pm\ic}=\mp\pi/3 \), and \( t_\lambda = e^{2\pi\lambda/3}t_0 \).
\end{theorem}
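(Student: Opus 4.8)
The plan is to read the statement off from Theorem~\ref{thm:airy} and the large-degree asymptotics of the recurrence coefficients of the non-Hermitian cubic orthogonal polynomials obtained in \cite{MR3607591}. The first step is to specialize Theorem~\ref{thm:airy} to \( N=n \) and evaluate at \( z=-(\sqrt2n)^{2/3}t \): then \( -(\sqrt2n)^{-2/3}z = t \), and \eqref{Airy_Pol} becomes
\[
(2/n)^{1/3}q_n\big(-(\sqrt2n)^{2/3}t;\lambda\big) = -\beta_{n-1}^{(\lambda)}(t,n), \qquad (2/n)^{2/3}p_n\big(-(\sqrt2n)^{2/3}t;\lambda\big) = -2\,\gamma_n^{(\lambda)}(t,n)^2,
\]
and \( 2^{1/3}n^{-4/3}\sigma_n\big(-(\sqrt2n)^{2/3}t;\lambda\big) = -n^{-1}p_{n,n-1}^{(\lambda)}(t,n) \). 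Thus the theorem is equivalent to the three asymptotic relations
\[
\beta_{n-1}^{(\lambda)}(t,n) = x_\lambda(t) + \mathcal O\big(n^{-1}\big), \quad \gamma_n^{(\lambda)}(t,n)^2 = -\frac1{2x_\lambda(t)} + \mathcal O\big(n^{-2}\big), \quad \frac1n p_{n,n-1}^{(\lambda)}(t,n) = -x_\lambda(t) + \big(2x_\lambda(t)\big)^{-2} + \mathcal O\big(n^{-2}\big),
\]
uniformly on the indicated \( t \)-sets (by the scaling relations \eqref{scal_form} these are equivalently statements about the polynomials orthogonal with respect to \( e^{-V(s;t)} \) studied in \cite{MR3607591}). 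Finally, since \( \Ai(z)\pm\ic\Bi(z) = 2e^{\pm2\pi\ic/3}\Ai(ze^{\mp2\pi\ic/3}) \), the rotation \( s\mapsto e^{\mp2\pi\ic/3}s \) carries the weight on \( \Gamma(\pm\ic) \) to the weight on \( \Gamma(0) \) with parameter \( te^{\pm2\pi\ic/3} \); combined with \( O_{(\pm\ic)}=e^{\pm2\pi\ic/3}O_{(0)} \) and \eqref{rel1}, this reduces all of \( \lambda\in\{0,-\ic,\ic\} \) to \( \lambda=0 \).

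The second ingredient is the Deift--Zhou steepest descent analysis of the Riemann--Hilbert problem for these orthogonal polynomials performed in \cite{MR3607591}, in the one-cut regime attached to \( t\in O_{(0)} \), where the \( g \)-function is built from the algebraic function \( x_0(t) \) solving \( x^3-tx-1=0 \), equivalently \( t=(x^3-1)/x \). From the resulting strong asymptotics one reads \( \beta_{n-1} \) and \( \gamma_n^2 \) off the three-term recurrence \eqref{recurrence}, and \( p_{n,n-1} \) off \eqref{der_lnD} together with the identity \( p_{n,n-1}^{(\lambda)}(t,n)=n\,(F_n^{(\lambda)})^\prime(t) \) (a consequence of \eqref{der_lnD} and \( Z_n=n!\,D_{n-1} \)) and the topological expansion of the free energy \( F_n=n^{-2}\log Z_n \) of \cite{MR3071662,MR3607591}. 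The leading orders are the equilibrium values \( x_0(t) \) of \( \beta_n \) and \( -1/(2x_0(t)) \) of \( \gamma_n^2 \), and the genus-zero term \( -x_0(t)+(2x_0(t))^{-2} \) of \( (F_n)^\prime \). The \( \mathcal O(n^{-2}) \) errors in the \( p_n \) and \( \sigma_n \) formulas reflect the absence of odd-order corrections in those expansions, while the index shift \( n-1\mapsto n \) — i.e.\ evaluating the recurrence-coefficient expansion at \( (n-1)/n \) rather than at \( 1 \) — accounts for the honest \( \mathcal O(n^{-1}) \) in the \( q_n \) formula.

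What is not automatic, and where the real work lies, is uniformity over a \( t \)-region that depends on \( n \). The error terms of the steepest descent analysis depend on \( t \) essentially through \( x_\lambda(t) \), and the point is to make this dependence explicit rather than to work only on fixed compact subsets of \( O_{(\lambda)} \) as in \cite{MR3607591}. Two mechanisms govern the regions of validity. Near the endpoint \( t_\lambda\in\partial O_{(\lambda)} \) the equilibrium measure has an Airy-type vanishing and the local parametrix degrades, so \( \gamma_n^2 \) and \( p_{n,n-1} \) cannot be controlled with \( \mathcal O(n^{-2}) \) error inside a sector \( |\arg(t-t_\lambda)-\theta_\lambda|\le\delta \) — precisely the sector excluded in the \( p_n \) and \( \sigma_n \) formulas. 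On the other hand, as \( t\to\infty \) in \( O_{(\lambda)} \) one has \( x_\lambda(t)\to\infty \), the endpoints of the cut spread out, and the subleading coefficient of the expansion for \( \beta \) grows polynomially in \( |x_\lambda(t)| \); hence the two-term expansion behind the \( q_n \) formula retains \( \mathcal O(n^{-1}) \) accuracy only while \( |x_\lambda(t)|<\sqrt n/\delta \), which is the region in the statement and the source of the pole-free disk of \( n \)-growing radius. Carrying out the steepest descent estimates with the \( x_\lambda(t) \)-dependence tracked — in particular a rescaled Riemann--Hilbert analysis valid as \( |x_\lambda(t)|\to\infty \) — is the main obstacle; granting it, the three displayed asymptotic relations follow, hence the theorem.
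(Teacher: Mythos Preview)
Your reduction via Theorem~\ref{thm:airy} with \(N=n\), the rotation relating \(\lambda\in\{0,-\ic,\ic\}\), and the invocation of the asymptotic expansions from \cite{MR3607591} for \(\gamma_n^2\), \(\beta_n\), and \(p_{n,n-1}\) are all correct, and this is exactly how the paper handles the \(p_n\) and \(\sigma_n\) formulae as well as the \(q_n\) formula on \emph{compact} subsets of \(O_{(\lambda)}\).

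The discrepancy is in how the \(q_n\) estimate is extended to the \(n\)-dependent region \(\{|x_\lambda(t)|<\sqrt n/\delta\}\). You propose to redo the steepest descent with the \(x_\lambda(t)\)-dependence tracked explicitly, including a rescaled Riemann--Hilbert analysis as \(|x_\lambda(t)|\to\infty\); you then ``grant'' this and deduce the theorem. That is not a proof but an outline of a substantial project, and it is not what the paper does. The paper avoids any new Riemann--Hilbert work altogether by a short bootstrap from the \emph{already established} \(\gamma_n^2\) estimate. Translating the second Hamiltonian equation in \eqref{ham_sys} through Theorem~\ref{thm:airy} gives the identity
\[
2n\,\beta_{n-1}(t,n)\,\gamma_n^2(t,n) + n = (\gamma_n^2)^\prime(t,n).
\]
Writing \(\gamma_n^2 = -1/(2x_\lambda) + E_{\gamma_n}\) and \(\beta_{n-1} = x_\lambda + E_{\beta_{n-1}}\), one solves algebraically for \(E_{\beta_{n-1}}\) in terms of \(E_{\gamma_n}\), \(E_{\gamma_n}^\prime\), and \(x_\lambda\). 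Since \(E_{\gamma_n}=\mathcal O(n^{-2})\) holds on closed subsets of the \emph{fixed} region \(\{|\arg(t-t_\lambda)-\theta_\lambda|>\delta\}\), Cauchy's formula gives the same bound for \(E_{\gamma_n}^\prime\), and then the constraint \(|x_\lambda(t)|<\sqrt n/\delta\) makes every term on the right \(\mathcal O(n^{-1})\) (the dominant one is \(2x_\lambda^2 E_{\gamma_n}=\mathcal O(n\cdot n^{-2})\)). This yields \(E_{\beta_{n-1}}=\mathcal O(n^{-1})\) on the growing region with no further input from \cite{MR3607591}.

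So the gap is real: you have identified the obstacle but not overcome it, and the route you sketch is far harder than necessary. The paper's key idea---feeding the \(\gamma_n^2\) asymptotics back through the Hamiltonian relation to control \(\beta_{n-1}\)---is what you are missing.
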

\begin{proof}
It holds for \( \lambda\in\{-\ic,\ic\} \) that
\[
P_n^{(0)}(s;t,N) = e^{2\pi\lambda n/3} P_n^{(\lambda)}\left( se^{-2\pi\lambda/3};te^{2\pi\lambda/3},N \right),
\]
see \eqref{ortho}. This implies that
\[
\begin{cases}
\gamma_n^{(0)}(t,N)^2 & = e^{4\pi\lambda/3} \gamma_n^{(\lambda)} \big( te^{2\pi\lambda/3},N )^2, \smallskip \\
\beta_n^{(0)}(t,N) & = e^{2\pi\lambda/3} \beta_n^{(\lambda)} \big( te^{2\pi\lambda/3},N ), \smallskip \\
p_{n,n-1}^{(0)}(t,N) & = e^{2\pi\lambda/3} p_{n,n-1}^{(\lambda)} \big( te^{2\pi\lambda/3},N ).
\end{cases}
\]
It has been shown in \cite[Theorem~4.5]{MR3607591} that for \( |n-N|\leq N_0 \) it holds that
\begin{equation}
\label{asymp_exp}
\begin{cases}
\gamma_n^{(-\ic)}(t,N)^2 & \sim -1/(2x_{-\ic}(t)) + \sum_{k=1}^\infty G_k(t;n-N) N^{-k}, \medskip \\
\beta_n^{(-\ic)}(t,N) & \sim x_{-\ic}(t) + \sum_{k=1}^\infty B_k(t;n-N) N^{-k},
\end{cases}
\end{equation}
where the expansions hold uniformly on compact subsets of \( O_{(-\ic)} \) and closed subsets of
\[
O_{(-\ic),\delta} := \{t\in O_{(-\ic)}: |\arg(t-t_{-\ic})-\theta_{-\ic}|>\delta\}
\]
in the case of \( \gamma_N^{(-\ic)}(t,N) \) (the actual condition of separation from \( \partial O_{(-\ic)} \) is more refined in \cite[Theorem~4.5]{MR3607591}, see \cite[Definition~4.1]{MR3607591}). The functions \( G_k(t;n-N) \) and \( B_k(t;n-N) \) are analytic in \( O_{(-\ic)} \) and \( G_{2k-1}(t;0) \equiv 0 \). It is also implicitly contained in \cite[Section~8.2]{MR3607591} that
\begin{equation}
\label{asymp_exp2}
p_{n,n-1}^{(-\ic)}(t,N) = -nx_{-\ic}(t) + N(2x_{-\ic}(t))^{-2} + \mathcal O\big(N^{-1}\big),
\end{equation}
where the error term is again analytic in \( O_{(-\ic)} \), uniform on compact subsets of \( O_{(-\ic)} \) and closed subsets of \( O_{(-\ic),\delta} \) when \( n=N \). The second and third claims of the theorem now follow from Theorem~\ref{thm:airy} by taking \( N=n \) in \eqref{asymp_exp2} and in the first formula of \eqref{asymp_exp}. The first claim follows by replacing \( n \) with \( n-1 \) in the second line of \eqref{asymp_exp} and taking \( N=n \) there, except this proves it only on compact subsets of \( O_{(\lambda)} \). To overcome this difficulty, write
\begin{equation}
\label{gamma_nn}
\gamma_n^2(t,n) = -1/(2x_\lambda(t)) + E_{\gamma_n}(t),
\end{equation}
where we suppress the dependence on \( \lambda \) of the recurrence coefficients as emphasizing it is no longer important. The error term \( E_{\gamma_n}(t) \) is such that
\[
|E_{\gamma_n}(t)| \leq C_F /n^2, \quad t\in F,
\]
for each closed subset \( F \subset O_{(\lambda),\delta} \) and any \( \delta>0 \). Notice that \( F_\epsilon = \{t:\dist(t,F)\leq \epsilon\} \) belongs to \( O_{(\lambda),\delta^\prime} \) for some appropriate choices of \( \epsilon,\delta^\prime>0 \). Applying Cauchy integral formula on circles of radius \( \epsilon \) around every point of \( F \) yields that \( E_{\gamma_n}^\prime(t) \) satisfies the same type of bound as \( E_{\gamma_n}(t) \), possibly with a different constant. Now, let us write
\begin{equation}
\label{beta_nn}
\beta_{n-1}(t,n) = x_\lambda(t) + E_{\beta_{n-1}}(t).
\end{equation}
It follows from Theorem~\ref{thm:airy} and the second relation in \eqref{ham_sys} that
\[
2n\beta_{n-1}(t,n)\gamma_n^2(t,n) + n = (\gamma_n^2)^\prime(t,n). 
\]
Therefore, we get from \eqref{gamma_nn} and \eqref{beta_nn} that
\[
E_{\beta_{n-1}}(t) \big( 1 - 2x_\lambda(t) E_{\gamma_n}(t) \big) = 2x_\lambda^2(t)E_{\gamma_n}(t) - \frac1{2n} \left( \frac {x^\prime_\lambda(t)}{x_\lambda(t)} + 2x_\lambda(t)E_{\gamma_n}^\prime(t) \right).
\]
Because \( x^3-xt-1=0 \), we have that \( x^\prime = x^2 / (2x^3+1) \) for any branch \( x(t) \). Since we already have the estimate of \( E_{\beta_{n-1}}(t) \) on compact subsets of \( O_{(\lambda)} \), we can assume without loss of generality that \( |x_\lambda(t)| \geq 1 \). Thus, it holds that
\[
E_{\beta_{n-1}}(t) \big( 1 - \mathcal O_\delta\big(n^{-3/2}\big) \big) = \mathcal O_\delta\big(n^{-1}\big)
\]
uniformly on closed subsets \( O_{(\lambda)} \) for which \( 1\leq|x_\lambda(t)| \leq \sqrt n/\delta \). This, of course, finishes the proof of the theorem. 
\end{proof}

Theorem~\ref{thm:scal_lim} yields the following corollary, which gives pole and zero free regions around the origin in the \( t \) plane for the Painlev\'e II functions. The results in \cite{MR3860606} in this direction are asymptotic in \( z \) without uniformity in the parameter \( n \), but in our case we can show that the radius of these domains increases with \( n \), something that is consistent with previous numerical experiments.

\begin{corollary}
For each \( t_*>0 \) such that \( \{|t|\leq t_*\} \subset O_{(\lambda)} \) there exists \( n_* \) such that the functions \( p_n(z;\lambda) \), \( q_n(z;\lambda) \) and \( \sigma_n(z;\lambda) \) are analytic and non-vanishing in \( \big\{|z|\leq (\sqrt2n)^{2/3}t_* \big\} \) for all \( n\geq n_* \).
\end{corollary}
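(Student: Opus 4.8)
The plan is to read the corollary off Theorem~\ref{thm:scal_lim}. Fix \( \lambda\in\{0,-\ic,\ic\} \) and \( t_*>0 \) with \( K:=\{|t|\le t_*\}\subset O_{(\lambda)} \); since \( O_{(\lambda)} \) is open, \( K \) is a compact, hence relatively closed, subset of it. Under the substitution \( z=-(\sqrt2n)^{2/3}t \) the closed disk \( \{|z|\le(\sqrt2n)^{2/3}t_*\} \) is precisely the image of \( K \), so it suffices to show that the three rescaled functions on the left of Theorem~\ref{thm:scal_lim}, viewed as functions of \( t \), are analytic and zero-free on \( K \) for all large \( n \). The scheme is: (i) check that \( K \) satisfies the hypotheses of Theorem~\ref{thm:scal_lim}; (ii) observe that the limit functions \( -x_\lambda \), \( 1/x_\lambda \), \( x_\lambda-(2x_\lambda)^{-2} \) are holomorphic and bounded away from \( 0 \) on \( K \); (iii) absorb the \( \mathcal O(n^{-1}) \) and \( \mathcal O(n^{-2}) \) error terms.

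For (i) I would argue as follows. Since \( t_0\in\partial O_{(0)} \), the identities \( O_{(\pm\ic)}=e^{\pm2\pi\ic/3}O_{(0)} \) and \( t_\lambda=e^{2\pi\lambda/3}t_0 \) give \( t_\lambda\in\partial O_{(\lambda)} \), whence \( K\subset O_{(\lambda)} \) forces \( t_*<|t_\lambda|=|t_0| \) (otherwise \( t_\lambda\in K\subset O_{(\lambda)} \), contradicting \( t_\lambda\in\partial O_{(\lambda)} \)). Moreover \( \theta_\lambda=\arg t_\lambda \) in every case: for \( \lambda=0 \) because \( t_0<0 \); for \( \lambda=\pm\ic \) because \( t_{\pm\ic}=|t_0|e^{\ic(\pi\pm2\pi/3)} \) has argument \( \mp\pi/3\pmod{2\pi}=\theta_{\pm\ic} \). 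Hence the excluded sector \( \{|\arg(t-t_\lambda)-\theta_\lambda|\le\delta\} \) opens from \( t_\lambda \) in the direction pointing straight away from the origin, so for a fixed \( \delta\in(0,\pi/2) \) the apex \( t_\lambda \) is its point closest to \( 0 \); the sector therefore lies in \( \{|t|\ge|t_\lambda|\} \) and is disjoint from \( K \), i.e. \( K\subset\{t\in O_{(\lambda)}:|\arg(t-t_\lambda)-\theta_\lambda|>\delta\} \). Finally \( x_\lambda \) is bounded on the compact \( K \), say \( |x_\lambda|\le M \) there, so \( K\subset\{t\in O_{(\lambda)}:|x_\lambda(t)|<\sqrt n\} \) once \( n>M^2 \). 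Consequently, for all \( n \) past some \( n_1 \), the three relations of Theorem~\ref{thm:scal_lim} hold uniformly on \( K \); in particular \( q_n,p_n,\sigma_n \) are finite on \( K \), hence (being meromorphic) analytic there, which already gives the analyticity assertion.

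For (ii)--(iii): \( x_\lambda \) is holomorphic and never \( 0 \) on \( O_{(\lambda)} \) (from \( x^3-xt-1=0 \)), so \( 0<m:=\min_K|x_\lambda|\le M<\infty \); thus \( |{-x_\lambda}|\ge m \) and \( |1/x_\lambda|\ge 1/M \) on \( K \), and for \( n \) large the errors in the first two formulas of Theorem~\ref{thm:scal_lim} are \( <m/2 \) and \( <1/(2M) \) there, forcing \( q_n(\cdot;\lambda)\ne0 \) and \( p_n(\cdot;\lambda)\ne0 \) on the disk. The \( \sigma_n \)-limit \( g_\lambda:=x_\lambda-(2x_\lambda)^{-2} \) vanishes exactly where \( x_\lambda(t)^3=1/4 \), and the crux is that this never occurs on \( O_{(\lambda)} \). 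The cube roots of \( 1/4 \) are \( 4^{-1/3}\approx 0.630 \) and \( 4^{-1/3}e^{\pm2\pi\ic/3} \); the latter two have negative real part, hence lie outside \( \Omega_{(0)}\subset\{\re>0\} \). And \( 4^{-1/3} \) is separated from \( \Omega_{(0)} \) by the subarc of \( \Delta \) joining \( e^{\pm\pi\ic/3}\sqrt[3]{1/2} \): that subarc (which under \( s=2x^3 \) is the loop of \( \mathcal C \)) crosses the positive real axis at the \( x \) with \( 2x^3 \) equal to the real crossing \( \approx 0.635 \) of that loop, i.e. at \( x\approx 0.684 \), see \cite[Section~5]{MR3607591} and \cite[Theorem~1.1]{MR3218792}, while \( \Omega_{(0)} \) lies on the side containing \( x_0(0)=1 \); since \( 0.630<0.684 \), \( 4^{-1/3}\notin\Omega_{(0)}=x_0(O_{(0)}) \). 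The cases \( \lambda=\pm\ic \) follow from \eqref{rel1}, which yields \( g_{\pm\ic}(t)=e^{\pm2\pi\ic/3}g_0(te^{\mp2\pi\ic/3}) \). Hence \( g_\lambda \) is holomorphic and zero-free on \( O_{(\lambda)} \), so \( \mu:=\min_K|g_\lambda|>0 \), the \( \mathcal O(n^{-2}) \) error in the third formula of Theorem~\ref{thm:scal_lim} is \( <\mu/2 \) for \( n \) large, and \( \sigma_n(\cdot;\lambda)\ne0 \) on the disk. Taking \( n_* \) to be the largest of the finitely many thresholds above finishes the proof.

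The one genuinely delicate step I expect is the zero-freeness of \( g_\lambda \) on \( O_{(\lambda)} \): because \( 4^{-1/3}\approx0.630 \) sits only slightly to the left of the loop's crossing \( \approx0.684 \), no crude estimate separates them, so one must invoke the explicit description of the critical graph \( \mathcal C \) and of \( \Omega_{(\lambda)} \) from \cite{MR3607591} (building on \cite{MR3218792}). The remaining ingredients — avoiding the excluded sector and mopping up the error terms via compactness of \( K \) — are routine.
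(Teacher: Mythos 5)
Your argument is correct and takes the same route the paper intends: the corollary is stated without proof as an immediate consequence of Theorem~\ref{thm:scal_lim}, and you correctly supply the only non-obvious ingredients, namely that the disk \( \{|t|\le t_*\} \) avoids the excluded sector (since \( t_\lambda\in\partial O_{(\lambda)} \) forces \( t_*<|t_\lambda| \) and the sector points radially outward) and that the limit \( x_\lambda-(2x_\lambda)^{-2} \) is zero-free on \( O_{(\lambda)} \) because the cube roots of \( 1/4 \) lie outside \( \Omega_{(\lambda)} \). (Only a cosmetic slip: the unimodular prefactor in \( g_{\pm\ic}(t)=e^{\pm2\pi\ic/3}g_0\big(te^{\mp2\pi\ic/3}\big) \) should be \( e^{\mp2\pi\ic/3} \), which does not affect the zero-freeness conclusion.)
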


The asymptotic behavior of the above quantities in the complement of \( \overline O_{(-\ic)} \) was studied in \cite{MR4436195}. However, the results of \cite[Theorem~4.2]{MR4436195} on asymptotic behavior of \( \beta_n \) and \( \gamma_n^2 \) are not complete, not in the least due to the observed presence of their poles in this region.

\section{Monotonicity of Airy Solutions}

It was conjectured by Clarkson in \cite{MR3529955}, see also \cite{MR3516120}, that
\begin{equation}
\label{conj1}
q_{n+1}(z;0) < q_n(z;0) 
\end{equation}
is true for every \( z\leq 0 \) and every \( n\geq 1 \). In light of Corollary~\ref{cor:airy}, this claim is equivalent to \( \beta_n(t,1) > \beta_{n-1}(t,1) \) being true for \( t \geq 0 \) and \( n\geq 1 \). Since \cite{MR3516120} uses slightly different notation, let us point out that
\[
D_{n-1}(t,1) =  2^{n(n-1)/3}\det \left[ \frac{d^{j+k}}{dt^{j+k}} \Ai(t)  \right]_{j,k=0}^{n-1},
\]
where everything related to orthogonal polynomials corresponds to \( \lambda=0 \) in \eqref{alphas}, see \eqref{Dtau} and \eqref{taun} (\( C_1=1 \) and \( C_2=0 \)). This quantity was labeled by \( \Delta_n(t) \) in \cite[Equation~(5.2)]{MR3516120}. Hence, it follows from \eqref{der_lnD}, \eqref{der_pn}, and \eqref{gammanN} as well as \eqref{der_hn} and \eqref{hnN} that
\begin{equation}
\label{gb}
\gamma_n^2(t,1) = \frac{d^2}{dt^2}\log D_{n-1}(t,1) \qandq \beta_n(t,1) = \frac{d}{dt} \log \frac{D_{n-1}(t,1)}{D_n(t,1)}.
\end{equation}
These quantities were labeled as \( -a_n(t) \) and \( b_{n-1}(t) \) in \cite[Equation~(5.1)]{MR3516120}. In \cite{MR3516120}, these functions were studied since  they are solutions of an \emph{alternative discrete Painlev\'e I system}:
\[
\begin{cases}
\gamma_n^2(t,1) + \gamma_{n+1}^2(t,1) + \beta_{n+1}^2(t,1) = t \smallskip \\
\gamma_n^2(t,1) \big( \beta_{n+1}(t,1) + \beta_n(t,1) \big) = -n,
\end{cases}
\]
see \cite{MR1251864}. Using rescaling formulae \eqref{scal_form},  the above relations can be rewritten as
\[
\begin{cases}
\gamma_n^2(t,N) + \gamma_{n+1}^2(t,N) + \beta_{n+1}^2(t,N) = t \smallskip \\
\gamma_n^2(t,N) \big( \beta_{n+1}(t,N) + \beta_n(t,N) \big) = -n/N,
\end{cases}
\]
which are also known as \emph{discrete string equations}, see \cite{MR3155177}. It was conjectured in \cite[Conjectures~3.2 and~5.3]{MR3516120} that
\begin{equation}
\label{conj3}
0 < \beta_n^\prime(t,1) < 1/(2\sqrt t) \qandq \sqrt t < \beta_n(t,1) < \beta_{n+1}(t,1)
\end{equation}
for all \( t>0 \) and \( n\geq 0 \). It was also shown that the first inequality in \eqref{conj3} implies the last inequality there and therefore \eqref{conj1}. To see this, observe that it follows from the difference of the successive top string equations, \eqref{gb}, and the bottom string equation that
\[
\beta_{n+1}(t,1) - \beta_n(t,1) = \frac{\gamma_{n-1}^2(t,1) - \gamma_{n+1}^2(t,1)}{\beta_{n+1}(t,1) + \beta_n(t,1)} = -(\beta_n^\prime(t,1) + \beta_{n-1}^\prime(t,1))\frac{\gamma_n^2(t,1)}n.
\]
Now, it was shown in \cite[Equation~(4.17)]{MR4031470} that
\[
D_{n+1}(t,1) = -(n+1)D_{n-1}(t,1)\int_t^\infty \frac{D_n^2(s,1)}{D_{n-1}^2(s,1)}ds
\]
for \( n\geq 0 \), where \( D_0(t,1) = \Ai(t) \) and \( D_{-1}(t,1) \equiv 1 \). Let \( \iota_1 <0 \) be the largest real zero of \( \Ai(t) \). Then it follows from the above relation and \eqref{hnN} that \( \sgn \, h_n(t,1) = (-1)^n \) for \( t\in(\iota_1,\infty) \). This, of course, means that \( \gamma_n^2(t,1)<0 \) there by \eqref{gammanN}, which yields the desired implication on \( (\iota_1,\infty) \).

\begin{theorem}
\label{thm:clarkson}
There exists \( n_0>0 \) such that for all \( n\geq n_0 \) it holds that 
\[
\beta_{n-1}^\prime(t,1)>0, \quad t\in[\iota_1,\infty).
\]
In particular, \eqref{conj1} holds for \( z \leq -2^{1/3} \iota_1 \) and all \( n\geq n_0\).
\end{theorem}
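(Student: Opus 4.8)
The plan is to reduce the first assertion to the inequality \( \gamma_{n-1}^2(t,1) > \gamma_n^2(t,1) \) for all \( t\in[\iota_1,\infty) \) and all large \( n \), and then to split the range of \( t \) into an inner region \( \iota_1\le t\lesssim n^{5/3} \), handled by the scaling limit of Theorem~\ref{thm:scal_lim}, and an outer region \( t\gtrsim n^{4/3} \), handled directly from the discrete string equations; since \( n^{4/3}\ll n^{5/3} \) these regions overlap for \( n \) large. The reduction is immediate from \eqref{gb}: differentiating \( \beta_{n-1}(t,1)=\frac{d}{dt}\log\frac{D_{n-2}(t,1)}{D_{n-1}(t,1)} \) and using \( \gamma_m^2(t,1)=\frac{d^2}{dt^2}\log D_{m-1}(t,1) \) gives \( \beta_{n-1}^\prime(t,1)=\gamma_{n-1}^2(t,1)-\gamma_n^2(t,1) \). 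Granting the first assertion for all sufficiently large indices, \eqref{conj1} follows: from the identity displayed just before the theorem, rearranged as \( \beta_n(t,1)-\beta_{n-1}(t,1)=-\tfrac{\gamma_n^2(t,1)}{n}\bigl(\beta_{n-1}^\prime(t,1)+\beta_n^\prime(t,1)\bigr) \), together with \( \gamma_n^2(t,1)<0 \) on \( (\iota_1,\infty) \) and \( \beta_{n-1}^\prime,\beta_n^\prime>0 \) there, one gets \( \beta_n(t,1)>\beta_{n-1}(t,1) \), i.e.\ \( q_{n+1}(z;0)<q_n(z;0) \) for \( z\le-2^{1/3}\iota_1 \); the endpoint \( t=\iota_1 \) is handled by continuity, using that \( D_m(\iota_1,1)\neq0 \) and \( \gamma_m^2(\iota_1,1)<0 \) for large \( m \), so all objects involved are smooth and the strict sign persists.

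In the inner region I would pass to the \( N=n \) scaling via \eqref{scal_form}: there \( \beta_{n-1}^\prime(t,1)=n^{-1/3}\beta_{n-1}^\prime(s,n) \) with \( s=n^{-2/3}t \) and \( \beta_{n-1}^\prime(\,\cdot\,,n)=n\bigl(\gamma_{n-1}^2(\,\cdot\,,n)-\gamma_n^2(\,\cdot\,,n)\bigr) \), so it suffices to show \( \beta_{n-1}^\prime(s,n)>0 \) for \( s\in[n^{-2/3}\iota_1,Mn] \) with \( M \) a fixed constant (which can be made large by choosing \( \delta \) small). For such \( s \) the proof of Theorem~\ref{thm:scal_lim} gives \( \beta_{n-1}(s,n)=x_0(s)+E(s) \) with \( E(s)=\mathcal O_\delta(n^{-1}) \) uniformly — combining the compact-subset expansion near \( s=0 \) with the refined estimate for \( 1\le|x_0(s)|\le\sqrt n/\delta \) — and a Cauchy estimate on small circles gives the same for \( E^\prime \), hence \( \beta_{n-1}^\prime(s,n)=x_0^\prime(s)+\mathcal O_\delta(n^{-1}) \). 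Since \( x_0>0 \) on \( O_{(0)}\cap\R \) forces \( x_0^\prime=x_0^2/(2x_0^3+1)>0 \) there, with \( x_0^\prime(s) \) bounded below by a positive multiple of \( (1+|s|)^{-1/2} \), one has \( x_0^\prime(s)\gtrsim n^{-1/2}\gg n^{-1} \) on \( [n^{-2/3}\iota_1,Mn] \), so \( \beta_{n-1}^\prime(s,n)>0 \) for \( n \) large. This yields \( \gamma_{n-1}^2(t,1)>\gamma_n^2(t,1) \) for \( \iota_1\le t\le Mn^{5/3} \).

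In the outer region I would argue directly from the discrete string equations at \( N=1 \), anchored by the closed forms \( \beta_0(t,1)=-\Ai^\prime(t)/\Ai(t) \) (first moment over zeroth moment) and the resulting \( \gamma_1^2(t,1)=t-(\Ai^\prime(t)/\Ai(t))^2 \), together with the classical expansion \( \Ai^\prime(t)/\Ai(t)=-\sqrt t-\tfrac1{4t}+\cdots \) (equivalently the Riccati relation \( (\Ai^\prime/\Ai)^\prime+(\Ai^\prime/\Ai)^2=t \)). Propagating the recursion from \( m=1 \) up to \( m=n \) with the ansatz \( \beta_m(t,1)=\sqrt t+\tfrac{2m+1}{4t}+\cdots \), \( \gamma_m^2(t,1)=-\tfrac{m}{2\sqrt t}+\tfrac{m^2}{4t^2}+\cdots \) and controlling the remainders, one obtains these with implied constants uniform in \( m\le n \) provided \( t\ge Cn^{4/3} \) (this range keeps all correction terms small relative to the leading ones uniformly along the recursion); in particular \( \gamma_{n-1}^2(t,1)-\gamma_n^2(t,1)=\tfrac1{2\sqrt t}+\mathcal O(n\,t^{-2})>0 \) there. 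Equivalently, this large-\( t \) behaviour can be read off from a steepest-descent analysis of the Airy-contour moment integrals, in which for large \( t \) the weight concentrates near the saddle \( s=\sqrt t \) and the polynomials are governed by a Hermite model; in either route the issue is uniformity in \( n \), which the large-\( z \) asymptotics of \cite{MR3860606} do not provide.

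Combining the two regions and taking \( n_0 \) larger than the thresholds from each (and large enough that \( Cn^{4/3}\le Mn^{5/3} \)) proves \( \beta_{n-1}^\prime(t,1)>0 \) on \( [\iota_1,\infty) \) for \( n\ge n_0 \), and \eqref{conj1} follows as in the first paragraph. The main obstacle is the outer region: one must carry the string-equation asymptotics through \( \sim n \) steps of a singular recursion while keeping the implied constants from growing too fast in \( n \) — in particular one must check that the homogeneous, \( (-1)^m \)-oscillatory part of the recursion does not amplify the accumulated error — so that the window \( n^{4/3}\lesssim t\lesssim n^{5/3} \) on which it is needed genuinely overlaps the reach of the scaling limit; this bookkeeping is the delicate step.
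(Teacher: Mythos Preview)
Your inner-region argument (pass to the \( N=n \) scaling, write \( \beta_{n-1}(s,n)=x_0(s)+E(s) \) with \( E=\mathcal O(n^{-1}) \), use Cauchy's formula to get \( E'=\mathcal O(n^{-1}) \), and compare against \( x_0'(s)\asymp 1/x_0(s) \)) is precisely what the paper does. The reduction \( \beta_{n-1}'(t,1)=\gamma_{n-1}^2(t,1)-\gamma_n^2(t,1) \) is correct but unnecessary for this part; both you and the paper end up working directly with \( \beta_{n-1}'(s,n) \).

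The real divergence is in the outer region. The paper does not redo any of the bookkeeping you flag as delicate: it simply invokes \cite[Lemma~3.3]{MR3516120}, which already proves \( \beta_n'(t,1)>0 \) for \( t\ge 2^{-2/3}n^{4/3} \). With that in hand, the inner region only needs to reach \( t\le 2^{-2/3}n^{4/3} \), i.e.\ \( s\in[-\epsilon,n^{2/3}] \) in the scaled variable, where \( x_0(s)\lesssim n^{1/3} \) --- comfortably inside the range \( |x_0|<\sqrt n/\delta \) of Theorem~\ref{thm:scal_lim}. So there is no need to push the scaling limit out to \( s\sim Mn \), no overlap window \( n^{4/3}\lesssim t\lesssim n^{5/3} \) to manage, and no string-equation propagation to control through \( n \) steps. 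Your approach would work in principle, but the piece you yourself identify as the obstacle is exactly the piece that is already available in the literature; recognising this collapses the proof to a single region.
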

\begin{proof}
It was shown in  \cite[Lemma~3.3]{MR3516120} that \( \beta_n^\prime(t,1)>0 \) for \( t\geq 2^{-2/3}n^{4/3} \). Thus, we are only interested in \( t\in [\iota_1,2^{-2/3}n^{4/3}] \). It follows from \eqref{scal_form} that
\[
\beta_{n-1}(t,1) = n^{1/3}\beta_{n-1}(n^{-2/3}t,n).
\]
Hence, it is sufficient for us to show that \( \beta_n^\prime(t,n) >0 \) for \( t\in F_n:=[-\epsilon,n^{2/3} ] \), where \( \epsilon>0 \) and small. To this end, recall that \( x^\prime = x^2 / (2x^3+1) \). Observe also that \( x_0(t)>0 \) for \( t\geq-\epsilon \) (this is the branch that is equal to \( 1 \)  at \( 0 \) and is positive for \( t\geq 0 \)). Hence, \( x_0^\prime(t)>0 \) for \( t\geq-\epsilon \) and therefore \( x_0(t) \) is increasing there. Notice that \( x_0(t) < 2\sqrt t \) for \( t\geq1 \) (constant \( 2 \) is by no means optimal). Indeed, otherwise we would have that
\[
1 = x_0(t) \big( x_0^2(t) - t \big) > 6 t^{3/2} \geq 6,  
\]
which is impossible. Thus, \( F_n \) lies within the set where \( |x_0(t)| \leq 2n^{1/3} < 2n^{1/2} \). Moreover, there exists an open set \( U_n \supset F_n \) satisfying \( \dist(\partial U_n,F_n)\geq c>0 \) that also lies within \( \{t:|x_0(t)| \leq 2n^{1/2}\} \). As in the proof of Theorem~\ref{thm:scal_lim}, we can write
\[
\beta_{n-1}(t,n)  = x_0(t) + E_{\beta_{n-1}}(t), 
\]
where \( E_{\beta_{n-1}}(t) \) is analytic in \( O_{(0)} \) and satisfies (through the use of Cauchy integral formula)
\[
|E_{\beta_{n-1}}^\prime(t)| \leq K/n, \quad t\in F_n,
\]
for some constant \( K>0 \). Since we can take \( \epsilon \) small enough so that \( 2x_0^3(-\epsilon)>1 \), it then holds that
\[
\beta_{n-1}^\prime(t,n)  = x_0^\prime(t) + E_{\beta_{n-1}}^\prime(t) > \frac1{4x_0(t)} - \frac Kn \geq \frac1{4x_0(n^{2/3})} - \frac Kn > \frac{n^{2/3}-8K}{8n}
\]
for \( t\in F_n \), which finishes the proof of the theorem.
\end{proof}

\small

\bibliographystyle{plain}

\bibliography{airy}

\end{document}